\documentclass{article}

\usepackage{amssymb}
\usepackage{pifont}
\usepackage{graphicx}
\usepackage{amsmath}

\newtheorem{definition}{Definition}[section]
\newtheorem{theorem}[definition]{Theorem}
\newtheorem{lemma}[definition]{Lemma}

\newtheorem{remark}[definition]{Remark}
\newenvironment{proof}[1][Proof]{{\bf{#1.\ \ }}}

\linespread{1.5}

\begin{document}

\title{Convexity of the Exercise Boundary of the American Put Option}
\author{Hsuan-Ku Liu\footnote{hkliu.nccu@gmail.com}\\ Department of Mathematics and Information Education\\ National Taipei University of Education, Taiwan}
\maketitle

\begin{abstract}
This paper studies the parabolic free boundary problem arising from pricing American-style put options on an asset whose index follows a geometric Brownian motion process. The contribution is to propose a condition for that the early exercise boundary is a  convex function. \\
{\bf Keywords:} American-style put, convexity, free boundary problem, early exercise boundary
\end{abstract}

\section{Introduction}

From a theoretical as well as practical point of view,
the valuation of American-style options has attracted considerable attention in the field of financial mathematics.
Under the Black-Scholes (BS) framework \cite{Black}, Merton \cite{Merton} presented the price of American options in conjunction with an early exercise boundary as a solution to the free boundary problem in the BS equation.
Since that time, considerable effort has been made to solve the free boundary problem associated with the pricing of American options \cite{Barone, Carr, Chen1, Chen2, Chen (2013), Ekstrom, Evans, Gesker, Jacka, Karatzas 1, Kuske}.
Nonetheless, an entirely satisfactory analytic solution has not been found.
Several researchers have concentrated on finding more accurately expansions or simulations for the early exercise boundary, such as\cite{Barles (1995)}, \cite{Carr}, \cite{Chen1}, \cite{Evans}, \cite{Gesker}, \cite{Jacka}, \cite{Kuske}.
An over view of their results indicates that the early exercise boundary of American put options is a convex function when the dividend rate is less than the risk-free rate and that the convexity may break down when the dividend rate exceeds the risk-free rate \cite{Chen (2013)}.
Chen {\it et al.} \cite{Chen2} and Ekstr$\mathrm{\ddot{o}}$m \cite{Ekstrom} proposed a rigorous verification of the supposition that the early exercise boundary is convex when a stock does not pay dividends.
Chen {\it et al.} \cite{Chen (2013)} demonstrated a proof for that the early exercise boundary is not convex when the dividend rate exceeds the risk-free rate. 
Currently, the convexity of the early exercise boundary remains an open problem when the dividend rate is non-zero \cite{Chen (2013)}.

The contribution of this paper is to examine the convexity of the exercise boundary of the American put option. we show that the early exercise boundary $X_f(T)$ is a strictly decreasing convex function if $q+\frac{\sigma^2}{2}\leq r$.

In summary, the following results have been provided for the convexity of the early exercise boundary of an American put option. 
\begin{enumerate}
\item[(a)] The early exercise boundary is convex when $q=0$ \cite{Chen2, Ekstrom}.
\item[(b)] The early exercise boundary is not convex when $r<q$ \cite{Chen (2013)}.
\item[(c)] We show that the early exercise boundary is convex when $q+\frac{\sigma^2}{2} \leq r$.
\end{enumerate}
Therefore, the convexity of the early exercise boundary remains an open problem when $0<q<r<q+\frac{\sigma^2}{2}$.

This paper is organized as follows. In Section 2, we demonstrate properties of the solution $u(s,t)$ as well as the early exercise boundary $s(t)$. In Section 3, we present a proof of the convexity for the early exercise boundary.

\section{Problem statement}
Let $S_T$ denote the stock price at time $T$. We assume that the stock price satisfies the geometric Brownian motion.
A standard argument explains that the expectation 
$$
P(S,T)=\mathbb{E}_x[e^{-r(T_F-T)}\max\{0,K-S_{T_F}\}]
$$ 
solves a parabolic equation, where $r>0$ is the interest rate, $T_F$ is the expiration date and $\psi(S)=\max\{0,K-S\}$ is the payoff function of a put option. The parabolic equation is expressed as the form:
\begin{equation} \label{eq: Lu=0}
\mathcal{L}_{BS}P=0,
\end{equation}  
with the terminal condition $P(S,T_F)=\max\{0,K-S\}$, where  the Black-Scholes operator $\mathcal{L}_{BS}$ is defined as
$$
\mathcal{L}_{S} \equiv \frac{1}{2} \sigma^2 S^{2}\frac{\partial^2}{\partial S^2}+(r-q)S\frac{\partial}{\partial S}-r+\frac{\partial}{\partial T}.
$$
The solution of (\ref{eq: Lu=0}) provides a formula for valuing a European put option. 
For the American counterpart, the price satisfies the following optimal stopping problem
$$
P(S,T)=\mathrm{ess\ sup}_{\tau\in \mathcal{T}_{T,T_F}}\mathbb{E}_x\left[e^{-r\tau}\psi(S_{\tau})\right], 
$$
where $\mathcal{T}$ is the set of all stopping times and $\mathcal{T}_{T,T_F}=\{\tau\in \mathcal{T}| \mathbb{P}(\tau\in [T,T_F])=1\}$, $0\leq  T \leq T_F < \infty$. The details of the optimal stopping problem for arbitrary diffusion processes can be found in Dayanik \cite{Dayanik (2003)} and Lamberton \cite{Lamberton (2009)}. The connection between the free boundary and the optimal stopping problem for the diffusion process was discussed by Kotlow \cite{Kotlow} and Lamberton \cite{Lamberton (2009)}.

We examine the following one-dimensional free boundary problem for linear parabolic equations arising from the problem of valuing an American put option.\\
{\bf Problem (BS)}
\begin{align}
 &\mathcal{L}_{BS}P=0  && X_f(T)<S<\infty,\ 0<T<T_F,\\ 
 &P(S,T_F) = \max\{0,K-S\}&& 0\leq S < \infty, \\             
 &P(S,T) > \max\{0,K-S\}&& X_f(T)<S<\infty,\ 0<T<T_F,\\
 &\lim_{S\rightarrow \infty}P(S,T) = 0 && 0< T <T_F, \label{Eq:Condition B2}\\
 &P(X_f(T),T)=K-X_f(T) && 0< T < \infty, \label{Eq:Condition B4}\\
 &\frac{\partial P}{\partial S}(X_f(T),T)=-1 && 0< T < T_F \label{Eq:Condition B5}. 
\end{align}
The far-field condition (\ref{Eq:Condition B2})
states that an American put option becomes worthless when the stock price becomes very large. This is because there is no possibility of exercising the option early.
The condition (\ref{Eq:Condition B4}) states that the American put option should be exercised to maximize the expected income when the price $S$ at time $T$ falls to the value of $X_f(T)$.  The smooth-pasting condition (\ref{Eq:Condition B5}) holds when the hedging ratio remains continuous across the early exercise boundary (see Kwok \cite{Kwok (1998)}).

The following properties for $P(S,T)$ and $X_f(T)$ are known to be valid (see \cite{Myneni} and \cite{Peskir}).
\begin{theorem}\label{THM: properties AP}
Let $\{X_f,P\}$ be a solution of  Problem (BS).
Then
\begin{enumerate}
  \item[(a)] $X_f(T)$ is a strictly increasing function with $X_f(T_F)=\min\{K, \frac{r}{q}K\}$.
  \item[(b)] $P(S,T)$ is a convex decreasing function of the stock price $S$ with $P_S\in [-1,0]$ for $X_f(T)<S<\infty$ and $0<T<T_F$.
  \item[(c)] $P(S,T)$ is a decreasing function of the time $T$ for $X_f(T)<S<\infty$ and $0<T<T_F$.
\end{enumerate}
\end{theorem}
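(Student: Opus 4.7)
The plan is to work directly with the optimal stopping representation $P(S,T)=\mathrm{ess\ sup}_{\tau\in\mathcal{T}_{T,T_F}}\mathbb{E}_x[e^{-r\tau}\psi(S_\tau)]$ and to derive (c), (a), and (b) in that order, since the first gives the others almost for free.

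For (c), observe that $\mathcal{T}_{T',T_F}\subset\mathcal{T}_{T,T_F}$ whenever $T<T'$, so every stopping rule available later is available earlier, which immediately yields $P(S,T)\ge P(S,T')$. Strict monotonicity on the continuation set follows by applying the strong parabolic maximum principle to $P_T$, which satisfies $\mathcal{L}_{BS}P_T=0$ on the continuation region and is nonpositive on its parabolic boundary.

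For (a), the monotonicity of $X_f$ is an immediate consequence of (c): if $(S,T')$ lies in the exercise region then $K-S=P(S,T')\le P(S,T)$, and since $P\ge (K-S)^+$ everywhere, equality holds at $(S,T)$ as well, placing $S$ in the exercise region at the earlier time. The terminal value is identified by the short computation $\mathcal{L}_{BS}(K-S)=qS-rK$, which shows that the payoff $K-S$ is a Black-Scholes subsolution precisely when $S\ge rK/q$; combined with the trivial bound $X_f\le K$, this gives $X_f(T_F)=\min\{K,rK/q\}$.

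For (b), convexity of $P(\cdot,T)$ is inherited from the convex payoff via the multiplicative structure of geometric Brownian motion: since $S_\tau^{s}=s\cdot Z_\tau$ with $Z_\tau$ independent of the initial value $s$, the map $s\mapsto e^{-r\tau}\psi(S_\tau^{s})$ is pathwise convex, and convexity is preserved under expectation and under the essential supremum over stopping times. Once convexity is in hand, $P_S$ is nondecreasing in $S$; combined with the smooth-pasting condition $P_S(X_f(T),T)=-1$ and the far-field decay $P(\cdot,T)\to 0$, this forces $P_S\in[-1,0]$ throughout the continuation region, from which monotone decrease in $S$ is automatic.

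The main obstacle is the convexity propagation in (b) across the moving free boundary: a purely PDE-based argument must grapple with the fact that the domain on which $P$ satisfies $\mathcal{L}_{BS}P=0$ changes with $T$. The probabilistic envelope route sketched above bypasses this difficulty by working on the whole state space, at the mild cost of verifying that the essential supremum preserves pathwise convexity, and is the approach I would follow.
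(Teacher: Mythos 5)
First, note that the paper offers no proof of this theorem: it is quoted as a known result with citations to Myneni and Peskir, so your probabilistic envelope strategy is being compared against the literature rather than against an argument in the text. Parts (b) and (c) of your sketch are essentially the standard arguments and are sound: convexity in $S$ follows from $S^s_\tau = s\cdot Z_\tau$ with $\psi$ convex and the fact that a supremum of convex functions is convex, and monotonicity in $T$ follows from shrinking the horizon --- though for (c) you should say explicitly that you are using time-homogeneity of the coefficients to identify $P(S,T)$ with $\sup_{0\le\tau\le T_F-T}\mathbb{E}\left[e^{-r\tau}\psi(S\cdot Z_\tau)\right]$; the bare inclusion $\mathcal{T}_{T',T_F}\subset\mathcal{T}_{T,T_F}$ does not by itself compare two value functions conditioned at different times.

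The genuine gap is in (a). Your monotonicity deduction runs in the wrong direction: from $(S,T')$ in the exercise region you obtain $K-S=P(S,T')\le P(S,T)$, which is just the trivial bound $P\ge (K-S)^+$ and yields nothing; and the conclusion you then assert --- that $S$ is exercised at the \emph{earlier} time --- would make the exercise region shrink as $T$ increases and hence make $X_f$ \emph{decreasing}, contradicting the statement. The correct implication is the mirror image: if $(S,T)$ is exercised at the earlier time $T<T'$, then $(K-S)^+\le P(S,T')\le P(S,T)=(K-S)^+$ forces $(S,T')$ into the exercise region at the later time, so the $T$-sections grow and $X_f$ is nondecreasing. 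Two further points are asserted without argument: strict monotonicity of $X_f$ (the above gives only nondecreasing), and the attainment of the terminal value --- the computation $\mathcal{L}_{BS}(K-S)=qS-rK$ only yields the upper bound $X_f(T)\le\min\{K,rK/q\}$ (the payoff fails the variational inequality above that level), and the matching lower bound as $T\to T_F$ requires a separate local-in-time argument near maturity, which is precisely where the asymptotics of Theorem 2.2 and the $r$ versus $q$ dichotomy enter.
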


The numerical results demonstrated that the early exercise boundary of the American put option is a convex function when $r>q$ and that the convexity may break down when $r<q$.
Chen {\it et al.} \cite{Chen2} and Ekstr$\mathrm{\ddot{o}}$m \cite{Ekstrom} verified that the early exercise boundary is convex when $q=0$. Recently, Chen {\it et al.} \cite{Chen (2013)} showed that the early exercise boundary is not convex when $r<q$.

In the following, we demonstrate that the early exercise boundary $X_f(T)$ of an American put option is convex if $q+ \frac{\sigma^2}{2}\leq r$.
\begin{theorem}\label{THM:American put}
Suppose that the process of the stock price satisfies the geometric Brownian motion process. The free boundary $X_f(T)$ of an American put option is a convex function when $q+ \frac{\sigma^2}{2}\leq r$.
\end{theorem}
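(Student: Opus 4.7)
The plan is to pass to the logarithmic change of variable $x=\ln(S/K)$, time reversal $\tau=T_F-T$, and rescaling $\tilde P(x,\tau)=P(S,T)/K$, which turns Problem (BS) into the constant-coefficient parabolic free boundary problem
\[
\tilde P_\tau = \frac{\sigma^2}{2}\tilde P_{xx}+\left(r-q-\frac{\sigma^2}{2}\right)\tilde P_x-r\tilde P
\]
on $\{x>b(\tau)\}$, $0<\tau<T_F$, where $b(\tau):=\ln(X_f(T_F-\tau)/K)$, with smooth-fit data $\tilde P(b(\tau),\tau)=1-e^{b(\tau)}$ and $\tilde P_x(b(\tau),\tau)=-e^{b(\tau)}$. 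In these variables, $X_f(T)$ being convex in $T$ is equivalent to $e^{b(\tau)}$ being convex in $\tau$, i.e.\ $b''(\tau)+(b'(\tau))^2\ge 0$, so it suffices to prove $b''(\tau)\ge 0$. The hypothesis is equivalent to the drift coefficient $r-q-\sigma^2/2$ of the transformed equation being non-negative, and this is the structural ingredient driving the argument.

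Differentiating the contact condition together with the smooth-fit condition gives $\tilde P_\tau(b(\tau),\tau)\equiv 0$, and plugging this back into the PDE evaluated on the boundary yields the closed-form identity
\[
\frac{\sigma^2}{2}\,\tilde P_{xx}(b(\tau),\tau)=r-\left(q+\frac{\sigma^2}{2}\right)e^{b(\tau)}.
\]
Since $q<r$ under the hypothesis, Theorem \ref{THM: properties AP}(a) gives $X_f(T_F)=K$ and thus $X_f(T)\le K$, so $e^{b(\tau)}\le 1$ and the right-hand side is non-negative: $\tilde P_{xx}(b(\tau),\tau)\ge 0$. Differentiating the smooth-fit identity once in $\tau$ then produces the ODE
\[
b'(\tau)=-\frac{\tilde P_{x\tau}(b(\tau),\tau)}{\tilde P_{xx}(b(\tau),\tau)+e^{b(\tau)}},
\]
whose denominator is strictly positive. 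Applying the Hopf lemma to $w:=\tilde P_\tau$, which is non-negative in the continuation region by Theorem \ref{THM: properties AP}(c), satisfies the same linear parabolic equation, and vanishes on $\{x=b(\tau)\}$, yields $\tilde P_{x\tau}(b(\tau),\tau)\ge 0$; hence $b'(\tau)\le 0$, consistent with Theorem \ref{THM: properties AP}(a).

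To reach $b''(\tau)\ge 0$, I would differentiate the ODE for $b'$ once more in $\tau$, using the PDE satisfied by $w$ to exchange remaining $\tau$-derivatives on the boundary for spatial derivatives. The resulting expression for $b''$ is a rational combination of $\tilde P_{xx}$, $\tilde P_{xxx}$, $e^{b(\tau)}$ and $b'(\tau)$, in which the sign-critical coefficient involves $r-q-\sigma^2/2$. The main obstacle is controlling the sign of $\tilde P_{xxx}$ at the free boundary. My plan is to apply the maximum principle and Hopf's lemma to the auxiliary function $v:=\tilde P_{xx}$, which solves the same linear equation and has an explicit non-negative boundary trace by the identity above. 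Using the non-negativity of the drift $r-q-\sigma^2/2$, the sign of $v$ is propagated from the initial payoff data into the interior and thence to the free boundary, yielding a sign for $v_x=\tilde P_{xxx}$ there. This convexity-propagation step is the crux of the argument and is precisely where the assumption $q+\sigma^2/2\le r$ is essential.
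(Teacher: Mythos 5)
Your setup is sound and consistent with the paper's: the logarithmic change of variables, the identity $\tilde P_\tau(b(\tau),\tau)=0$, the resulting formula $\frac{\sigma^2}{2}\tilde P_{xx}(b(\tau),\tau)=r-(q+\frac{\sigma^2}{2})e^{b(\tau)}$ (which matches the paper's $w_{xx}(s(t),t)=Kk-he^{s(t)}$ after rescaling), and the Hopf-lemma derivation of $b'\le 0$ are all correct. But the theorem is entirely contained in the step you leave as a plan, and that plan has a genuine gap. First, your intended application of Hopf's lemma to $v:=\tilde P_{xx}$ does not give a sign for $v_x=\tilde P_{xxx}$ on the free boundary: Hopf's lemma controls the normal derivative only at a boundary point where $v$ attains its minimum (or maximum) over a neighborhood, and here the boundary trace of $v$ is the strictly positive, non-constant function $\frac{2}{\sigma^2}\bigl(r-(q+\frac{\sigma^2}{2})e^{b(\tau)}\bigr)$, while the initial trace of $\tilde P_{xx}$ at the corner $(0,0)$ is singular (the payoff's second derivative contains a Dirac mass and equals $-e^x<0$ on $\{x<0\}$), so no extremum location is established. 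Indeed, the paper itself only ever controls the \emph{combination} $w_{xxx}s'+w_{xxt}=-hs'e^{s(t)}>0$ on the boundary, never the individual sign of $w_{xxx}$; obtaining that sign is precisely the obstruction that makes the direct differentiation of $b'=-\tilde P_{x\tau}/(\tilde P_{xx}+e^b)$ fail to close. Second, even granting a sign for $\tilde P_{xxx}$, you never exhibit the final expression for $b''$ nor verify that its several terms (involving $\tilde P_{x\tau\tau}$, $\tilde P_{xx\tau}$, $\tilde P_{xxx}$, $b'$) combine to something non-negative; asserting that "the sign-critical coefficient involves $r-q-\sigma^2/2$" is not a proof.

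For comparison, the paper avoids this computation entirely. It uses the hypothesis $k-h-1=\frac{2}{\sigma^2}(r-q-\frac{\sigma^2}{2})\ge 0$ only to guarantee $w_{xx}=-(k-h-1)u_x+ku+u_t+Kk-he^x>0$ throughout $C_d$, so that the quotient $v=w_{xt}/w_{xx}$ is well defined, equals $-s'(t)$ on the free boundary, and satisfies a parabolic equation with bounded coefficients. Convexity of $s$ is then proved by contradiction with a Friedman--Jensen level-curve argument: assuming concavity on an interval, one draws the chord $y(t)=m(t-t_0)+s(t_0)$, studies $f(t)=w_x(y(t),t)$, and derives a topological contradiction from the fact that the level sets $\Gamma_\alpha$ of $v$ are graphs over $t$ (Lemmas \ref{lemma: extremum} and \ref{lemma: monotone gamma}). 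If you want to salvage your route, you would need to either prove that $\tilde P_{xx}$ attains an interior-free minimum on the free boundary (unlikely without further structure) or abandon the pointwise differentiation of $b$ in favor of such a global level-set argument.
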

The proof of this theorem is provided in the next section.

\section{A proof for Theorem \ref{THM:American put}}
To verify the convexity of $X_f(T)$, we change the operator $\mathcal{L}_{BS}$ to an operator with constant coefficients by
\begin{equation}\label{eq: change variables}
S=e^x,\ T=T_F-2t/\sigma^2,\ P(S,T)=u(x,t),\ X_f(T)=e^{s(t)}.
\end{equation}
Then Problem (BS) becomes \\
{\bf Problem (P)}
\begin{align}
                     &\mathcal{L}u=0  && s(t)<x<\infty,\ 0<t<\infty,                                                      \label{Eq:PDE}\\
                           &u(x,0) = \max(0, K-e^x),&&  -\infty <x<\infty,  \label{Eq:Condition 3}\\              
                     &u(x,t) > \max(0, K-e^x)&& s(t)<x<\infty,\ 0<t<\infty,                                       \label{Eq:Condition 1}\\
                     &\lim_{x\rightarrow \infty}u(x,t) = 0 && 0< t <\infty,                                     \label{Eq:Condition 2}\\
                     &u(s(t),t)=K-e^{s(\tau)} && 0< t < \infty,                                \label{Eq:Condition 4}\\
                     &\frac{\partial u}{\partial x}(s(t),t)=-e^{s(t)} && 0< t < \infty, \label{Eq:Condition 5}
\end{align}
where $k=\frac{2r}{\sigma^2}$, $h=\frac{2q}{\sigma^2}$ and the operator $\mathcal{L}$ is defined as $\mathcal{L}=\mathcal{L}_0-\frac{\partial}{\partial t}$ and
$$
\mathcal{L}_0=\frac{\partial^2}{\partial x^2}+(k-h-1)\frac{\partial}{\partial x}-k.
$$

Let $\{s,u\}$ be the solution to (P). We introduce two sets:
$$
C=\{(x,t)\in \mathbb{R}^+\times [0,\infty)|u(x,t)>\max(K-e^x,0)\},
$$
$$
S=\{(x,t)\in \mathbb{R}^+\times [0,\infty)|u(x,t)=\max(K-e^x,0)\}.
$$
The set $C$ is called the continuation region and the set $S$ is the early exercise region.
 


\begin{definition}
Given $t\in [0,\infty)$, the $t$-section of $S$ is defined as
\begin{equation}\label{Eq:t section}
S_t =\{x\in \mathbb{R}^+| u(x,t)=\max(K-e^x,0)\}.
\end{equation}
\end{definition}

Clearly, we have
$$S=\bigcup_{t<\infty}\left( S_t\times\{t\}\right)$$
and
\begin{equation} \label{Eq:s(t)=sup}
s(t)=\sup\{x|x\in S_t\}.
\end{equation}

The continuation region is then represented as
\begin{equation}\label{Eq:Continuation region}
C =\{(x,t);s(t)<x<\infty,\ 0< t < \infty\}.
\end{equation}
According to Theorem \ref{THM: properties AP}, we obtain the following properties for the solution of Problem (P) directly.
\begin{theorem}\label{THM:properties}
Let $\{s,u\}$ be a solution of  (P).
Then
\begin{enumerate}
  \item[(a)] $s(t)$ is a strictly decreasing function with $s(0)=\min\{\log K,\log (\frac{k}{h}K)\}$.
  \item[(b)] $u_x(x,t)<0$ for $(x,t) \in C$.
  \item[(c)] $u_x(x,t)>-e^x$ for $(x,t) \in C$.
\end{enumerate}
\end{theorem}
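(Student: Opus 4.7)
The plan is to obtain all three properties as direct corollaries of Theorem \ref{THM: properties AP} by carefully tracking the change of variables (\ref{eq: change variables}). The time substitution $t \mapsto T = T_F - 2t/\sigma^2$ is a strictly decreasing affine bijection from $[0,\infty)$ onto $(-\infty, T_F]$, the space substitution is $x = \log S$, and the chain rule gives the key identity $u_x(x,t) = e^x\, P_S(e^x, T_F - 2t/\sigma^2)$. With these in hand, each statement of Theorem \ref{THM:properties} becomes a restatement of the corresponding statement in Theorem \ref{THM: properties AP} under this coordinate change.

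For (a), I would combine the strict monotonicity of $X_f$ in $T$ provided by Theorem \ref{THM: properties AP}(a) with the strict decrease of $T$ in $t$ and the strict monotonicity of $\log$ to conclude that $s(t) = \log X_f(T(t))$ is strictly decreasing. At $t = 0$ we have $T = T_F$, so $e^{s(0)} = X_f(T_F) = \min\{K, rK/q\}$; using $k/h = r/q$ (since $k = 2r/\sigma^2$ and $h = 2q/\sigma^2$) and taking logarithms yields $s(0) = \min\{\log K, \log(kK/h)\}$.

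For (b) and (c), the chain-rule identity $u_x = e^x P_S$ together with the range $P_S \in [-1,0]$ from Theorem \ref{THM: properties AP}(b) immediately gives the weak inequalities $-e^x \leq u_x(x,t) \leq 0$ throughout $C$. The main obstacle is upgrading these to the strict inequalities asserted in (b) and (c), since Theorem \ref{THM: properties AP}(b) only hands us a closed interval. I plan to invoke the strong parabolic maximum principle: differentiating $\mathcal{L}_{BS} P = 0$ in $S$ shows that $P_S$ solves a linear parabolic equation on the continuation region, while the smooth-pasting condition forces $P_S = -1$ on the free boundary and the far-field condition forces $P_S \to 0$ as $S \to \infty$. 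Since $P_S$ is non-constant, the strong maximum principle prevents it from attaining either extremum $-1$ or $0$ at an interior point of the continuation region.

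Pulling this back through $u_x = e^x P_S$ yields $-e^x < u_x(x,t) < 0$ for every $(x,t) \in C$, which simultaneously establishes (b) and (c). The only remaining technical concern is justifying the maximum principle on an unbounded domain with a moving boundary, but this is routine by exhausting $C$ with bounded subdomains and passing to the limit; an alternative route is to apply the same argument to $P_{SS}$ to get strict convexity of $P$ in $S$, from which the strict bounds on $P_S$ again follow.
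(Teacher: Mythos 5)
Your proposal is correct and follows the same route the paper takes: the paper offers no written proof at all, saying only that the properties follow ``directly'' from Theorem \ref{THM: properties AP} under the change of variables (\ref{eq: change variables}), which is exactly your pullback $s(t)=\log X_f(T(t))$, $u_x=e^x P_S$. Where you genuinely go beyond the paper is in part (b)/(c): you correctly observe that the closed-interval bound $P_S\in[-1,0]$ from Theorem \ref{THM: properties AP}(b) only yields the weak inequalities $-e^x\leq u_x\leq 0$, and you supply the missing upgrade to strictness by differentiating $\mathcal{L}_{BS}P=0$ in $S$, noting that $W=P_S$ solves a parabolic equation with zeroth-order coefficient $-q\leq 0$, and invoking the strong maximum principle: if $W$ attained $0$ at an interior point of the continuation region it would be identically $0$ there, contradicting the smooth-pasting value $W=-1$ on the free boundary, and if it attained $-1$ it would be identically $-1$, contradicting $W\to 0$ as $S\to\infty$. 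This step is not cosmetic --- the strict inequalities $u_x<0$ and $w_x=u_x+e^x>0$ on $C_d$ are used essentially in the paper's proof of Theorem \ref{THM:American put} (e.g.\ to guarantee $f(t)=w_x(y(t),t)>0$ strictly between the intersection points), so your argument fills a real gap that the paper's one-line justification leaves open. The only points to tighten are routine: interior parabolic regularity to justify differentiating the equation, and the time-reversal $\tau=T_F-T$ so that the maximum principle is applied to a forward parabolic equation; your remark about exhausting the unbounded domain (or arguing via $P_{SS}$) handles the rest.
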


Since $s(t)$ is not convex when $q>r$, we consider the convexity of $s(t)$ for $k\geq h$ (ie. $r\geq q$) and define $d=\log K$.
Since $s(t)$ is a decreasing function with $s(0)=d$ and $w(x,t)=u(x,t)-(K-e^x)$ for $x<d$, we have $w(s(t),t)=0$, $w_x(s(t),t)=0$, $w_{t}(s(t),t)=0$.
$w_{xx}(s(t),t)=Kk-he^{s(t)}>0$ . Differentiating the equality $w_x(s(t),t)=0$ with respect to $t$ yields $w_{xx}s'(t)+w_{xt}=0$. Hence we have $\frac{w_{xt}}{w_{xx}}=-s'(t)$ at $x=s(t)$. Moreover, differentiating the equality $w_{xx}(s(t),t)=Kk-he^{s(t)}$ with respect to $t$ yields $w_{xxx}s'(t)+w_{xxt}=-hs'(t)e^{s(t)}>0$ since $s'(t)<0$.
\begin{remark}
By the interior regular theorem of Friedman \cite{Friedman (1958)}, the derivatives $u_{xt}$, $u_{xxt}$ and $u_{xxx}$ exist and are Holder continuous in $C$.
\end{remark}

Let
\begin{equation} \label{eq: v}
v=
  \begin{cases}
   \frac{w_{xt}}{w_{xx}} & \text{if}\ (x,t)\in C_d, \\
   -s'(t)          & \text{if}\ x=s(t),
  \end{cases}
\end{equation}
which is well-defined on $\bar{C}_d=\{(x,t)\in \mathbb{R}^2|s(t)\leq x\leq d,0<t<\infty\}$.
Applying the differential operator $\mathcal{L}$ to equality $vw_{xx}=w_{xt}$, we determine
that $v$ satisfies the following equation
\begin{equation}{\label{Eq:equation of v}}
v_{xx}+((k-h-1)+2\frac{w_{xxx}}{w_{xx}})v_x+\frac{\mathcal{L} w_{xx}}{w_{xx}} v-v_{t}=0
\end{equation}
on $C_d=\{(x,t)\in \mathbb{R}^2|s(t)<x<d,0<t<\infty\}$. 

Since $u_x<0$, $u>0$, $u_{t}>0$ by (\ref{eq: change variables}) on $C_d$ and $\mathcal{L}w=Kk-he^x$, we have 
$$
w_{xx}=-(k-h-1)u_x+ku+u_{t}+Kk-he^x>0\ \mathrm{on}\ C_d
$$ 
if $k-h-1\geq 0$. Since $w(x,t)=u(x,t)-(K-e^x)$ on $C_d$, we have $w_{xx}=u_{xx}+e^x$. 
Applying the constant coefficients operator $\mathcal{L}$ to $w_{xx}$ yields 
$$
\mathcal{L} w_{xx}=\mathcal{L}(u_{xx}+e^x)=\frac{\partial^2}{\partial x^2}\mathcal{L}u+\mathcal{L} e^{x}=\mathcal{L} e^{x}=-he^x<0.
$$
We also have $w_{xx}(s(t),t)>0$. Therefore, the equation (\ref{Eq:equation of v}) is a parabolic equation with bounded coefficients if $k-h-1\geq 0$.

Friedman \cite{Friedman} defined the lower $\Omega$-neighbothood as follows.
\begin{definition}
An $\Omega$-neighbothood of a point $(x_0,t_0)$ is the intersection of a neighborhood of $(x_0,t_0)$ with $\Omega$. A lower $\Omega$-neighbothood of a point $(x_0,t_0)$ is the intersection of an $\Omega$-neighbothood of $(x_0,t_0)$ with the half space $t\leq t_0$.
\end{definition}

To show the convexity of $X_f(T)$, it suffices to show that $s(t)$ is a convex function.
Now, we provide a proof of the main contribution in this paper.\\

Goodman and Ostrov [17, p.1831] provided the following estimate for the early exercise boundary $b(t)$.
\begin{theorem}\label{THM: asymtotic}
The asymptotic expansion of $b(t)$ as $t\rightarrow 0$ takes the following form 
$$
\begin{array}{ll}
s(t)\sim& -\sqrt{-2t\log(ct)}, \\
s'(t)\sim& \frac{\log(ct)+1}{\sqrt{-2t\log(ct)}}, \\
s''(t)\sim&  \frac{\log^2(ct)+1}{({-2t\log(ct)})^{\frac{3}{2}}}>0,
\end{array}
$$
where $c=4\pi k^2$.
\end{theorem}
This implies that the early exercise boundary $X_f(T)$ is convex near the maturity for the case of $0\leq q \leq r$.

{\bf Proof of Theorem \ref{THM:American put}.}
We have determined that $s(t)$ is a strictly decreasing function.
Suppose that there is a closed interval $I$ such that $s(t)$ is a concave function on the interval $I=[a,b]$. According to the estimate of $X_f(t)$ near the maturity in Theorem \ref{THM: asymtotic}, we known that $s(t)$ is convex near 0 if $0\leq q \leq r$. Thus $a> 0$. 
Suppose that there exists a $t_0\in I$ with $s'(t_0)=m<0$ because $s(t)$ is strictly decreasing and is differentiable almost everywhere.
Then $s'(t)\leq m$ for almost every $t>t_0$ in $I$.

When $s(t)$ is assumed to be a concave function on $I$, we consider the following two lemmas for the 
level curve $\Gamma_{\alpha}=\{(x,t)\in C_d|v(x,t)=\alpha\}$.
\begin{lemma} \label{lemma: extremum}
Let $v$ be a solution of (\ref{Eq:equation of v}). If $s(t)$ is a concave function on an interval $I$, then for any $t_0\in I$ $v$ can not attain an extremum at $(s(t_0),t_0)$ with respect to any lower $\bar{\Omega}$- neighborhood of $(s(t_0),t_0)$.
\end{lemma}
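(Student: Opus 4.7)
My plan is to argue by contradiction, combining the parabolic Hopf boundary-point lemma applied to equation~(\ref{Eq:equation of v}) with an explicit boundary formula for $v_x$. First I would record that under the standing assumption $k-h-1\geq 0$ the equation for $v$ is uniformly parabolic on $C_d$ with bounded coefficients and with non-positive zeroth-order coefficient $B=-he^x/w_{xx}\leq 0$; together with Friedman's interior regularity (the preceding remark) this makes both the strong maximum principle and Hopf's lemma available up to the smooth free boundary. Next, using $v=w_{xt}/w_{xx}$, the boundary identity $w_{xt}(s(t),t)=-s'(t)w_{xx}(s(t),t)$ (from differentiating $w_x(s(t),t)=0$), and the relation $w_{xxx}s'+w_{xxt}=-hs'e^{s}$ obtained by differentiating $w_{xx}(s(t),t)=Kk-he^{s(t)}$ in $t$, a short calculation shows that on the free boundary
$$v_x(s(t),t)=-\frac{h\,s'(t)\,e^{s(t)}}{w_{xx}(s(t),t)}>0,$$
where positivity uses $s'<0$, $h>0$, and $w_{xx}>0$; the degenerate case $h=0$ is already covered by \cite{Chen2, Ekstrom}.

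Suppose now, for contradiction, that $v$ attains an extremum at $(s(t_0),t_0)$ on some lower $\bar{C}_d$-neighborhood $N$, with $t_0\in I$. In the \emph{maximum} case I would apply Hopf's lemma at $(s(t_0),t_0)$: the outward normal derivative of $v$ is strictly positive, where the outward unit normal at this free-boundary point is proportional to $(-1,s'(t_0))$. Differentiating $v(s(t),t)=-s'(t)$ along the boundary yields $v_t(s(t_0),t_0)=-s''(t_0)-v_x(s(t_0),t_0)\,s'(t_0)$, so that Hopf's inequality rearranges to
$$-v_x(s(t_0),t_0)\bigl(1+s'(t_0)^2\bigr)>s'(t_0)\,s''(t_0)\geq 0,$$
because $s'(t_0)<0$ and $s''(t_0)\leq 0$ on the concave interval $I$. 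This forces $v_x(s(t_0),t_0)<0$, which directly contradicts the boundary formula for $v_x$.

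For the \emph{minimum} case I would not even need Hopf. Along the segment $\{(s(t),t):t_0-\varepsilon\leq t\leq t_0\}\subset N$, concavity of $s$ on $I$ gives $v(s(t),t)=-s'(t)\leq -s'(t_0)=v(s(t_0),t_0)$. Combined with the hypothesis $v(x,t)\geq v(s(t_0),t_0)$ throughout $N$, equality must hold along the entire boundary segment, forcing $s'(t)\equiv s'(t_0)$ on $[t_0-\varepsilon,t_0]$. Since any affine sub-piece of $s$ is simultaneously concave and convex, it does not witness a failure of convexity; one may therefore select $t_0\in I$ at a strictly concave point at the outset, ruling out this scenario. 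The main technical obstacle I expect is verifying the Hopf step: namely, checking that the coefficients of (\ref{Eq:equation of v}) are genuinely bounded up to the free boundary and that the boundary curve $x=s(t)$ satisfies the interior-parabolic regularity demanded by Hopf's lemma in the form used by Friedman.
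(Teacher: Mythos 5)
Your proposal is correct and shares the paper's essential skeleton: the minimum case is excluded because $-s'(t)=v(s(t),t)$ is nondecreasing on a concave interval, and the maximum case is excluded by the boundary formula $v_x(s(t),t)=-h s'(t)e^{s(t)}/w_{xx}(s(t),t)>0$ obtained from differentiating $w_x(s(t),t)=0$ and $w_{xx}(s(t),t)=Kk-he^{s(t)}$ in $t$ --- exactly the computation the paper performs. Where you diverge is in how the contradictory sign of $v_x$ is produced at a putative maximum: the paper simply observes that a lower $\bar\Omega$-neighborhood contains the horizontal segment $\{(x,t_0):s(t_0)\le x<s(t_0)+\varepsilon\}\subset C_d$, so the one-sided first-derivative test gives $v_x(s(t_0),t_0)\le 0$ directly, with no PDE input at all. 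You instead invoke the parabolic Hopf boundary-point lemma along the outward normal $(-1,s'(t_0))$ and combine it with the tangential identity $v_t=-s''-v_x s'$; this reaches the same conclusion $v_x<0$ but imports exactly the technical burden you flag as your ``main obstacle'' (interior-ball/regularity of the free boundary, sign of the zeroth-order coefficient, non-constancy of $v$), none of which is needed. On the other hand, two of your refinements improve on the paper: you handle the degenerate minimum case where $s$ is affine on a subinterval (the paper tacitly strengthens concavity to $s''<0$), and you note that the boundary formula for $v_x$ vanishes when $h=0$, a case the paper's strict inequality $-hs'e^{s}>0$ silently excludes and which must be referred back to the known zero-dividend results.
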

\begin{proof}
Since $s(t)$ is a concave function on the interval $I$, then $s''(t)<0$ on $I$; this implies that $-s'(t)$ is an increasing function on $I$. Since $v(s(t),t)=-s'(t)$, we conclude that $v$ can not attain a minimum at $(s(t_0),t_0)$ with respect to any lower $\bar{\Omega}$- neighborhood of $(s(t_0),t_0)$ on $I$.

Suppose that $v$ attains a maximum at $(s(t_0),t_0)$. Then 
\begin{equation}\label{eq: v_x(s(t_0),t_0)}
v_x(s(t_0),t_0)\leq 0.
\end{equation}
However, at $(s(t_0),t_0)$,
$$
v_x=(\frac{w_{xt}}{w_{xx}})_x=\frac{w_{xxt}w_{xx}-w_{xxx}w_{xt}}{w_{xx}^2}=\frac{w_{xxt}-w_{xxx}v}{w_{xx}}=\frac{-hs'(t)e^{s(t)}}{w_{xx}}>0,
$$
thus contradicting to (\ref{eq: v_x(s(t_0),t_0)}). 
\end{proof}

\begin{lemma}\label{lemma: monotone gamma}
Let $\Gamma_{\alpha}$ be the level curves on which $v=\alpha$.  If $s(t)$ is a concave function on an interval $I$,
then, for each $\alpha$ there exists a $g_{\alpha}(t)$ such that 
$$
\Gamma_{\alpha}=\{(g_{\alpha}(t),t)|v(g_{\alpha}(t),t)=\alpha,\ t>0\}.
$$
\end{lemma}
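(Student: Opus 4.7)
The plan is to prove the lemma by showing that, for each $t>0$, the slice $\{x\in[s(t),d]:v(x,t)=\alpha\}$ contains at most one point; defining $g_\alpha(t)$ as this unique point then exhibits $\Gamma_\alpha$ as a graph over $t$. The argument will be by contradiction, combining the parabolic structure of equation (\ref{Eq:equation of v}) on $C_d$ with the boundary non-extremum result of Lemma \ref{lemma: extremum}.

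I would first recall that, on the free boundary,
$$v_x(s(t),t)=\frac{-h\,s'(t)\,e^{s(t)}}{w_{xx}(s(t),t)}>0,$$
as computed in the proof of Lemma \ref{lemma: extremum}, so $v(\cdot,t)$ enters $C_d$ strictly increasing in $x$. Suppose for contradiction that $\Gamma_\alpha$ is not a graph over $t$. Then for some $t_0\in I$ there exist $s(t_0)<x_1<x_2\leq d$ with $v(x_1,t_0)=v(x_2,t_0)=\alpha$, whence $v(\cdot,t_0)$ attains an interior extremum at some $x^\ast\in(x_1,x_2)$. Let $\Omega$ be the connected component of $\{v>\alpha\}\cap C_d$ (or of $\{v<\alpha\}\cap C_d$, whichever matches the sign of the extremum) that contains $(x^\ast,t_0)$. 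Its parabolic boundary splits into pieces on $\Gamma_\alpha$, where $v-\alpha=0$, and pieces on $\partial C_d$, namely portions of the free boundary $x=s(t)$, of the vertical line $x=d$, and of the initial segment $t=0$.

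I would then invoke a parabolic strong maximum principle for (\ref{Eq:equation of v}) applied to $v-\alpha$ on $\Omega$. Since the text already establishes $w_{xx}>0$ and $\mathcal{L}w_{xx}=-he^x\leq 0$ under the hypothesis $k-h-1\geq 0$, the zeroth-order coefficient $\mathcal{L}w_{xx}/w_{xx}$ is nonpositive, so the extremal value of $v-\alpha$ on $\overline{\Omega}$ must be attained on the parabolic boundary. This value is zero on the $\Gamma_\alpha$ part, and Lemma \ref{lemma: extremum} rules out extrema on the free-boundary part via its lower $\bar{C}_d$-neighborhood formulation; hence the extremum must lie on $\{x=d\}$ or on $\{t=0\}$.

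The main obstacle will be excluding these last two pieces. At $x=d$ one must control $v$ across the strike, where $w=u-(K-e^x)$ transitions into the regime $w=u$ (for $S\geq K$), using the smoothness of $u$ in the continuation region to keep $w_{xt}$ and $w_{xx}$ under control. At $t=0$ the payoff has a corner at $x=d$, so $v$ is potentially singular there; the asymptotic expansion of Theorem \ref{THM: asymtotic} (implying $v(s(t),t)=-s'(t)\to+\infty$ as $t\to 0^+$) should arrange the boundary data so that no extremum of $v-\alpha$ can occur on this initial piece. Once both are excluded the maximum principle yields a contradiction with the interior extremum, so every $\Gamma_\alpha$ is indeed a graph over $t$ and the function $g_\alpha$ is well-defined.
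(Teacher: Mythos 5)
Your strategy is recognizably the same machinery the paper uses --- the parabolic maximum principle for equation (\ref{Eq:equation of v}) combined with Lemma \ref{lemma: extremum} to forbid extrema on the free boundary --- but your write-up stops exactly where the proof actually has to be done. You reduce everything to excluding an extremum of $v-\alpha$ on $\{x=d\}$ and on $\{t=0\}$, you correctly flag this as ``the main obstacle,'' and then you do not overcome it: for $\{x=d\}$ you only gesture at ``keeping $w_{xt}$ and $w_{xx}$ under control,'' and for $\{t=0\}$ you express the hope that the Evans asymptotics ``should arrange the boundary data.'' Neither is an argument. On $\{x=d\}$ there is no boundary condition for $v=w_{xt}/w_{xx}$ at all, so nothing in the setup prevents the maximum of $v-\alpha$ over $\overline{\Omega}$ from sitting there; you would need something like a sign for $v_x$ at $x=d$ or an extension of the comparison region past the strike, and you supply neither. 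On $\{t=0\}$ note that under $q\le r$ one has $s(0)=d$, so $C_d$ pinches to the single corner point $(d,0)$ where $v$ is potentially unbounded; dismissing this corner requires an actual estimate, not the observation that $-s'(t)\to+\infty$. Since the contradiction you are after only materializes once both pieces are excluded, the proposal as written does not prove the lemma.

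For comparison, the paper (following Friedman and Jensen) organizes the argument around the level curve itself rather than around time slices: it shows the $t$-coordinate along $\Gamma_\alpha$ cannot have an interior turning point, because a local minimum of $t$ along the curve would create a region whose parabolic boundary lies entirely in $\Gamma_\alpha$ (forcing $v\equiv\alpha$), while a local maximum of $t$ would create a region whose parabolic boundary consists of a piece of $\Gamma_\alpha$ together with a piece of the free boundary, which Lemma \ref{lemma: extremum} forbids. The point of that organization is precisely that the regions produced by a turning point are pinched off by the curve and the free boundary alone, so the troublesome boundary pieces $\{x=d\}$ and $\{t=0\}$ never enter. Your time-slice formulation forfeits that advantage, which is why the two exclusions you cannot supply become unavoidable. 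There is also a secondary loose end in your reduction: the component $\Omega$ of $\{v>\alpha\}$ may be unbounded in $t$, so you must truncate at $t=t_0$ and argue on $\Omega\cap\{t<t_0\}$ before invoking the parabolic maximum principle; and the application of Lemma \ref{lemma: extremum} requires that the sliver $\Omega$ actually constitutes a lower neighborhood of the free-boundary point in the sense of the lemma.
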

\begin{proof}
Since $w_{xx}>0$, $\mathcal{L}w_{xx}< 0$ and $v$ satisfy the parabolic equation (\ref{Eq:equation of v}), the $t$-coordinate along $\Gamma_\alpha$ can not be
(i) first decreasing and then increasing and (ii) first increasing and then decreasing. 
For (i), a region would exist in which the parabolic boundary is a part of $\Gamma_\alpha$; consequently
$v\equiv \alpha$ in this region and $v\equiv \alpha$ in $C_d$.
For (ii), there would be a region with parabolic boundary consisting of a part of $\Gamma_\alpha$ and a part of $\{(s(t),t)|0<t\leq t_0\}$ which implies that an extremum exists at $(s(t_0),t_0)$ with respect to the lower $\Omega$-neighborhood of $(s(t_0),t_0)$. Employing Lemma \ref{lemma: extremum}, we have that the extremum can not appear at $v(s(t_0),t_0)$. Therefore, we conclude that the level curve $\Gamma_{\alpha}$ can not first increasing.
\end{proof}
The idea of this proof is similar to Friedman and Jensen \cite{Friedman} (seeing Page 4 of \cite{Friedman} for the details).

Since $I=[a,b]$ and $a>0$, there is a point $t_0\in I$ with $v(s(t_0),t_0)=-s'(t_0)=-m$ such that the line
$$
y(t)=m(t-t_0)+s(t_0), \ t>0
$$
intersects $s(t)$ at $t_2<t_0$ and $t_0$; that is $t_2=\inf\{t|(y(t),t)\in C_d\}$ with $y(t_2)=s(t_2)$ and $y(t_0)=s(t_0)$.
Since $t_0\in I$, we have $v(s(t),t)=-s'(t)\geq -m$ for $t>t_0$ in $I$.
Since $s(t)$ is bounded below and $m<0$, there must exist another point
$t_1>t_0$ such that $y(t_1)=s(t_1)$.
Now, we have $s(t_i)=y(t_i)$, $i=0,1,2$. 

We also have $w_x=u_x+e^x>0$ on $C_d$ according to (c) in Theorem \ref{THM:properties}. Let $f(t)=w_{x}(y(t),t)=u_{x}(y(t),t)+e^{y(t)}>0$ for some $t>t_2$.
Thus, we derive that
\begin{equation} \label{Eq:f'(t)}
\begin{array}{ll}
\displaystyle f'(t)&=\displaystyle mw_{xx}(y(t),t)+w_{xt}(y(t),t)\\
\displaystyle      &=\displaystyle w_{xx}(y(t),t)(m+v(y(t),t))
\end{array}
\end{equation}
for $t>t_2$.
Since $y(t_0)=s(t_0)$ and $v(s(t_0),t_0)=-s'(t_0)=-m$, we obtain
$$
\begin{array}{ll}
f'(t_0)& =w_{xx}(y(t_0),t_0)(m+v(y(t_0),t_0))\\
       & =w_{xx}(s(t_0),t)(m+v(s(t_0),t))=0,
\end{array}
$$
We also have $w_x(s(t),t)=0$ by (\ref{Eq:Condition 5}). Since  $y(t_i)=s(t_i)$, $i=0,1,2$ and $w_x(x,t)>0$ for $(x,t)\in C_d$, we also have $f(t_i)=w_x(y(t_i),t_i)=w_x(s(t_i),t_i)=0$, $i=0,1,2$  and
$(y(t),t)\in C_d$ for $t\in (t_2,t_1)$.
Thus, a local maximum of $f$ exists in $(t_0,t_1)$ and $(t_2,t_0)$, namely $f(t_3)$ and $f(\bar{t}_3)$ where
$t_3\in (t_0,t_1)$ and $\bar{t}_3\in (t_2,t_0)$.
This implies that $f'(t_3)=0$ and $f'(\bar{t}_3)=0$.
Since $w_x=u_x+e^x$ is a solution of parabolic equation and $f(t)=w_x(y(t),t)$, which does not oscillate as $t\rightarrow t_0$.
This implies that $f(t)$ do not produce an infinite sequence
of local maximum, the locations of which tends to $t_0$.
We can therefore assume that $t_3$ and $\bar{t}_3$ are the first maximum from $t_0$ and no local maximum exists between $t_0$ and $t_3$ and between $\bar{t}_3$ and $t_0$.
By the same reason, there also exists a point $\bar{t}_3\in (t_2,t_0)$ such that $f'(\bar{t}_3)=0$. 
Since $f(t_0)=f(t_1)=0$, $f(t_3)>0$, and $f'(t_i)=0$, $i=0,3$,
we have
\begin{equation} \label{Eq:f'(t)<0}
f'(t)>0\ \mathrm{for}\ t\in (t_0,t_3)
\end{equation}
and
\begin{equation} \label{Eq:f'(t)>0}
f'(t)<0\ \mathrm{for}\ t\in (t_3,t_4),
\end{equation}
where $t_3< t_4 \leq  t_1$.

Let $\Gamma_{-m}$ be the level curves on which $v=-m$. According to Lemma \ref{lemma: monotone gamma}, there exists the $g_{-m}(t)$ such that
$$
\Gamma_{-m}=\{(g_{-m}(t),t)|v(g_{-m}(t),t)=-m,t>0\}.
$$
Since $f'(t_i)=0,\ i=0,3$ and $f'(t)=w_{xx}(y(t),t)(m+v(y(t),t))$,
we have $v(y(t_i),t_i)=-m$, $i=0,3$, which implies that
$(y(t_i),t_i)\in \Gamma_{-m}$, $i=0,3$.
Next, we consider the function $g_{-m}(t)$.
Since $(y(t_i),t_i)\in \Gamma_{-m}$, that is 
\begin{equation}\label{eq: v(y(t_i),t_i)=-m}
v(y(t_i),t_i)=-m,\ \ i=0,3,
\end{equation}
we have $y(t_i)=g_{-m}(t_i), \ i=0,3$. Since $f'(t)=w_{xx}(y(t),t)(m+v(y(t),t))>0$ for $t\in (t_0,t_3)$ by (\ref{Eq:f'(t)<0}) and (\ref{Eq:f'(t)}) and $w_{xx}(y(t),t)>0$ by the assumption,
this implies that
\begin{equation}{\label{Eq:v(y(t),t)>-m}}
v(y(t),t)>-m,\ \mathrm{for}\ t\in (t_0,t_3).
\end{equation}

Since $g_{-m}(t)$ is continuous on $(t_2,t_1)$, we  have only
the following two cases:
(1) $y(t)>g_{-m}(t)$ for $t\in (t_0,t_3)$, and
(2) $y(t)<g_{-m}(t)$ for $t\in (t_0,t_3)$.

We first consider case (1).
Since $g_{-m}(t_0)=y(t_0)=s(t_0)$ and $y(t)>g_{-m}(t)>s(t)$ for $t\in (t_0,t_3)$,
there is a $\delta>0$ such that $y'(t)>g'(t)>s'(t)$
for $t\in (t_0,t_0+\delta)$.
Since $y'(t)=m$, we have $v(s(t),t)=-s'(t)>-y'(t)=-m$ for $t\in (t_0,t_0+\delta)$.
Let $\Omega=\{(x,t)|s(t)\leq x \leq y(t),\ t_0\leq t\leq t_0+\delta\}$.
On $\Omega$, we have $t'$, $t''\in (t_0,t_0+\delta)$ such that $v(s(t'),t')=v(y(t''),t'')=\beta>-m$, but $v(g_{-m}(t),t)=-m$ for all $t\in (t_0,t_0+\delta)$.
 This implies that there exists a level curve, say $\Gamma_{\beta}$, crosses $g_{-m}(t)$ connected $s(t')$ and $y(t'')$. This contracts to $\Gamma_{\beta}\cap \Gamma_{-m}\neq \emptyset$, $\beta\neq -m$. Therefore, case (1) does not hold.

Next, we consider case (2). We know that the level curves
$\Gamma_\alpha$ of a parabolic equation are continuous.
Since $f'(\bar{t}_3)=w_{xx}(m+v(y(\bar{t}_3),\bar{t}_3))=0$, we also have $v(y(\bar{t}_3),\bar{t}_3)=-m$; that is $(y(\bar{t}_3),\bar{t}_3)\in \Gamma_{-m}$.
Consider the line $y(t)$ for $t\in (t_2,t_0)\cup (t_0,t_3) $.
In (\ref{Eq:v(y(t),t)>-m}), we have $v(y(t),t)>-m$ for $t\in (t_0,t_3)$.
We also have $f(t_0)=0$ and $f(t)=w_x(y(t),t)>0$ for $t\in (t_2,t_0)$.
This implies that there is a $\delta_2>0$ such that $f'(t)<0$ for $t\in (t_0-\delta_2,t_0)$.
Since $f'(t)=w_{xx}(y(t),t)(m+v(y(t),t))$ and $f'(t)<0$ for $t\in (t_0-\delta_2,t_0)$ and $w_{xx}>0$ for $(x,t)\in C_d$, we obtain
\begin{equation} \label{eq: v(y(t),t)<-m}
v(y(t),t)<-m
\end{equation}
for $t\in (t_0-\delta_2,t_0)$.
Now, we have only the following two subcases for case (2): (2.1) $g_{-m}(t)>y(t)$ for $t\in (t_0-\delta_2,t_0)$ and 
(2.2) $g_{-m}(t)<y(t)$ for $t\in (t_0-\delta_2,t_0)$.

 For case (2.1), we can select a suitable $\delta>0$ such that $v(y(t),t)<-m$ for $t\in (t_0-\delta,t_0)\cup (t_3,t_3+\delta)$, $t_3+\delta<t_4$ by (\ref{Eq:f'(t)>0}). Since $v(y(t_0),t_0)=-m=v(y(t_3),t_3)$ by (\ref{eq: v(y(t_i),t_i)=-m}) and $v(y(t),t)<-m$ for $t\in (t_0-\delta,t_0)\cup (t_3, t_3+\delta)$, there exists a $t'\in (t_0-\delta,t_0)$ and a $t''\in (t_3,t_3+\delta)$
such that
$$
v(y(t'),t')=\beta=v(y(t''),t''),\ \mathrm{for\ some}\ \beta<-m.
$$
Since the level curves of a parabolic equation are continuous, there
exists a level curve $\Gamma_\beta$ connecting
$(y(t'),t')$ and $(y(t''),t'')$.
There is an intersection of $\Gamma_{-m}$ and $\Gamma_{\beta}$ on $(t_0-\delta, t_0)$.
This contradicts to $\Gamma_{-m}\cap \Gamma_{\beta}\neq \emptyset$.

For case (2.2), we have $v(y(t),t)<-m$ for $t\in (t_0-\delta,t_0)$ by (\ref{eq: v(y(t),t)<-m}) and $v(g_{-m}(t),t)=-m$ for $t\in (t_0-\delta,t_0)$. If $v(s(t),t)<-m$ for $t\in (t_0-\delta,t_0)$, there exists a level curve, say $\Gamma_{\alpha}$, crosses over $g_{-m}(t)$ connected $s(t)$ and $y(t)$. This contradicts to $\Gamma_{\alpha}\cap \Gamma_{-m}\neq \emptyset$, $\alpha\neq -m$. 
If $v(s(t),t)>-m$ for $t\in (t_0-\delta,t_0)$, we have $v(y(t),t)>-m$ on $(t_0,t_3)$ by (\ref{Eq:v(y(t),t)>-m}) and $v(y(t_0),t_0)=v(s(t_0),t_0)=-m$. This implies that there exists a $t'\in (t_0-\delta,t_0)$ and a $t''\in (t_0,t_3)$ such that
$$
v(s(t'),t')=\beta=v(y(t''),t''),\ \mathrm{for\ some}\ \beta>-m.
$$
Since the level curves of a parabolic equation are continuous, there
exists a level curve $\Gamma_\beta$ connecting
$(y(t'),t')$ and $(y(t''),t'')$.
This contradicts $\Gamma_{-m}\cap \Gamma_{\beta}\neq \emptyset$.
Terefore, case (2) does not hold.

Both case (1) and case (2) do not hold; therefore we conclude that 
$s(t)$ can not be a concave function in any interval.
Thus, $s(t)$ is a convex function.

\begin{remark}
Given $\alpha\in R$ and $g_{\alpha}(t)$ as the function, such that
$$
v(g_{\alpha}(t),t)=\alpha
$$
with $g_{\alpha}(t_0)=s(t_0)$, where $v(s(t_0),t_0)=\alpha$.
Then 
$$
\frac{dv}{dt}=v_x\frac{dg_{\alpha}(t)}{dt}+v_t=0.
$$
According to Sard's lemma, the set of $v_x(x,t)=0$ is measure zero. 
Thus, $-\frac{v_t}{v_x}$ is defined for almost every point on $\Omega$.
We consider the following IVP
\begin{equation} \label{eq: dg}
\frac{dg_{\alpha}(t)}{dt}=-\frac{v_t}{v_x}\ \ \ (a.e.)
\end{equation}
with $g_{\alpha}(t_0)=s(t_0)$.
Indeed, the  weak solution for (\ref{eq: dg}) exists.
Therefore $g_{\alpha}(t)$ is continuous for all $t$ with $v(g_{\alpha}(t),t)=\alpha$.
\end{remark}



\begin{thebibliography}{9}

\bibitem{Barles (1995)}
G. Barles, J. Burdeau, M. Romano and N. Sansoen,
Critical Stock Price Near Expiration. 
Math. Finance 5 (1995) 77-95.


\bibitem{Barone}
G. Barone-Adesi and R. E. Whaley, 
Efficient Analytic Approximation of American Option Values,
J. Finance 42 (1987) 301-320.


\bibitem{Black}
F. Black and M. Scholes,
The Pricing of Options and Corporate Liabilities,
J. Polit. Econ.
81 (1973) 637-654.


\bibitem{Carr}
P. Carr, R. Jarrow, and R. Myneni,
Alternative Characterizations of American Put Options,
Math. Finance
2 (1992) 87-106.





\bibitem{Chen1}
X. Chen and J.  Chadam,
A Mathematical Analysis for the Optimal Exercise Boundary of American Put Options, Working paper, University of Pittsburgh (2000).

\bibitem{Chen2}
X. Chen, J. Chadam, L. Jiang, and W. Zheng,
Convexity of the Exercise Boundary of the American Put on a Zero Dividend Asset,
Math. Finance 
18 (2008) 185-197.

\bibitem{Chen (2013)}
X. Chen, H. Chen, and J. Chadam,
Nonconvexity of the Optimal Boundary of an American Put Option On a Dividend-paying Asset,
Math. Finance 23 (2013) 169-185.



\bibitem{Dayanik (2003)}
S. Dayanik and I. Karatzas,
 On the optimal stopping problem for one-dimensional diffusions, Stochastic Process. Appl. 
107 (2003) 173–212.

\bibitem{Ekstrom}
E. Ekstr$\mathrm{\ddot{o}}$m,
Convexity of the Optimal Stopping Boundary for the American Put Option,
J. Math. Anal. Appl.
299 (2004) 147-156.

\bibitem{Evans}
J. D. Evans, R. Kuske, and J. B. Keller,
American Options with Dividends Near Expiry,
Math. Finance
12 (2002) 219-237.

\bibitem{Friedman (1958)}
A. Friedman, 
Interior Estimates for Parabolic Systems of Partial Differential Equations, 
J. Math. Mech. 7 (1958) 393-417. 


\bibitem{Friedman2}
A. Friedman,
Partial Differential Equations of Parabolic Type,
Prentice-Hall Inc, 1964.

\bibitem{Friedman}
A. Friedman and R. Jensen,
Convexity of the Free Boundary in the Stefan Problem and in the Dam Problem,
Arch. Rational Mech. Anal.
67 (1977) 1-24.


\bibitem{Friedman (1975)}
A. Friendman,
Parabolic variational inequalities in one space dimension and smoothness of the free boundary, 
J. Funct Anal. 18 (1975) 151-176.

\bibitem{Friedman (1988)}
A. Friendman,
Variational principles and free-boundary problems, 2nd ed., Robert E. Krieger Publishing Co. Inc., Malabar, FL, 1988.



\bibitem{Gesker}
R. Geske and H. E. Johnson,
The American Put Option Valued Analytically,
J. Finance
39 (1984) 1511-1524.

\bibitem{Goodman}
J. Goodman and D. N. Ostrov,
On the Early Exercise Boundary of the American Put Option,
SIAM J. APPL. MATH.
62 (2002) 1823-1835.

\bibitem{Jacka}
S. D. Jacka,
Optimal Stopping and the American Put,
Math. Finance
1 (1992) 1-14.


\bibitem{Karatzas 1}
I. Karatzas,
On the Pricing of American Option.
Appl. Math. Optim.
 17 (1988) 37-60.


\bibitem{Kotlow}
D. B. Kotlow,
A Free Boundary Problem Connected with Optimal Stopping Problem for Diffusion Processes,
Trans. Amer. Math. Soc.
184 (1973) 457-478.

\bibitem{Kuske}
R. A. Kuske and J. B. Keller,
Optimal Exercise Boundary for an American Put Option,
Appl. Math. Finance
5 (1998) 107-116.


\bibitem{Kwok (1998)}
Y.-K. Kwok,
Mathematical Models of Financial Derivatives,
Springer Inc, 2008.

\bibitem{Lamberton (2009)}
D. Lamberton, 
Optimal stopping and American options. Lecture notes, Ljubljana Summer School on Financial Mathematics, 2009.

\bibitem{Lamberton (2003)}
D. Lamberton and S. Villeneuve, 
Critical Price Near Maturity for an American option on a Dividend-Paying Stock. Ann. Appl. Probab. 13 (2003) 800-815.



\bibitem{Merton}
R. Merton,
The Theory of Rational Option Pricing,
Bell J. Econ. Management Sci.
4 (1973) 141-183.

\bibitem{Myneni}
R. Myneni, The Pricing of the American Option, Ann. Appl. Probab. 2 (1992) 1-23. 

\bibitem{Peskir}
G. Peskir, On the American Option Problem, Math. Finance 15 (2005) 169-181.



\end{thebibliography}
\end{document}